\newtheorem{thm}{Theorem}[section] 
\newtheorem{lem}[thm]{Lemma}
\begin{document}
\title{A Multi-Tier Wireless Spectrum Sharing System Leveraging Secure Spectrum Auctions}
\author{Ahmed~Abdelhadi,
        Haya~Shajaiah,
        and~Charles~Clancy,
\thanks{A. Abdelhadi, H. Shajaiah, and C. Clancy are with the Hume Center for National Security and Technology, Virginia Tech, Arlington,
VA, 22203 USA e-mail: \{aabdelhadi, hayajs, tcc\}@vt.edu.}
}

\maketitle
\begin{abstract}
Secure spectrum auctions can revolutionize the spectrum utilization of cellular networks and satisfy the ever increasing demand for resources. In this paper, a multi-tier dynamic spectrum sharing system is studied for efficient sharing of spectrum with commercial wireless system providers (WSPs), with an emphasis on federal spectrum sharing. The proposed spectrum sharing system optimizes usage of spectrum resources, manages intra-WSP and inter-WSP interference and provides essential level of security, privacy, and obfuscation to enable the most efficient and reliable usage of the shared spectrum.  It features an intermediate spectrum auctioneer responsible for allocating resources to commercial WSPs by running secure spectrum auctions. The proposed secure spectrum auction, \textbf{MTSSA}, leverages Paillier cryptosystem to avoid possible fraud and bid-rigging. Numerical simulations are provided to compare the performance of MTSSA, in the considered spectrum sharing system, with other spectrum auction 
mechanisms for realistic cellular systems.
\end{abstract}
\begin{keywords}
Multi-Tier Secure Spectrum Auction, Paillier Cryptosystem, Bid-rigging, Fraud
\end{keywords}
\providelength{\AxesLineWidth}       \setlength{\AxesLineWidth}{0.5pt}%
\providelength{\plotwidth}           \setlength{\plotwidth}{8cm}
\providelength{\LineWidth}           \setlength{\LineWidth}{0.7pt}%
\providelength{\MarkerSize}          \setlength{\MarkerSize}{3pt}%
\newrgbcolor{GridColor}{0.8 0.8 0.8}%
\newrgbcolor{GridColor2}{0.5 0.5 0.5}%
\section{Introduction}\label{sec:intro}
Traditionally, radio spectrum management is controlled by a central government agency such as the Federal Communications Commission (FCC) in the United States. Such a centralized spectrum assignment mechanism predetermines static bands for specific usage without taking into consideration the service requirements and the dynamic nature of the radio spectrum. This results in an under-utilized pre-assigned spectrum bands while many of the commercial bands are overcrowded due to the rapid growth of wireless services. To address this limitation in the spectrum utilization, the FCC has legalized secondary markets for spectrum such that a primary spectrum licensee can lease its under-utilized spectrum to secondary incumbents \cite{FCC-2004}. Inspired by microeconomics mechanisms \cite{Economic-Framework,Fair-Profit,Stackelberg-Game}, spectrum auction seems to be a promising solution to release the under-utilized spectrum to potential secondary users \cite{Zhou:2008:ESS,Trust,Jia:2009:RGT}.
There has been some previous work to deal with security issues in auction design. These works have focused on adding some new features to the auction design, such as confidentiality, fairness \cite{New-Sealed-Bid,Batch-Verification} and anonymity.

Because of the reusability feature of the radio spectrum, traditional auctions can not be directly used in a spectrum auction design. Spectrum auctions should allow bidders, that are not within the interference radius of each other, to use the same frequency simultaneously. Therefore, the optimal spectrum allocation is considered NP-complete \cite{Framework-for-Wireless,Jain:2003} whereas conventional auctions are based on optimal allocations \cite{Zhou:2008:ESS}. In addition, a spectrum auction design is challenged by the effect of the back-room dealing, between insincere bidders and the auctioneer, to the whole network. A secure spectrum auction needs to avoid possible frauds of the auctioneer and bid-rigging between the bidders and the auctioneer.

In this paper, we design a secure spectrum auction that leverages Paillier cryptosystem, \textbf{MTSSA}, to avoid possible frauds and bid-riggings and provide a framework for a multi-tier spectrum sharing system to achieve an efficient utilization for the under-utilized spectrum.

\subsection{Related Work}\label{relatedwork}
Most early works in spectrum auctions, such as \cite{Jia:2009:RGT, Zhou:2008:ESS}, have focused on single-seller multi-buyer auctions that deal with homogeneous channels. In \cite{Zhou:2008:ESS}, the authors have proposed \textbf{VERITAS}, a truthful mechanism that supports an eBay-like dynamic spectrum market. It is a good fit for short term and small regions based spectrum auction which is not the case in FCC required spectrum auction which is for long term and large geographical regions. To deal with interference between neighboring bidders, a conflict graph and a wireless spectrum auction framework have been proposed in \cite{Zhou:2008:ESS}. Based on these concepts, a conflict graph is used to represent the interference relationship in \textbf{VERITAS} \cite{Zhou:2008:ESS}. In a sealed secondary price and VCG auctions, the dominant strategy for certain bidder, when he has no information about other bidders' bids, is to bid with his true evaluation values \cite{AuctionTheory}. The authors in \cite{multi-
winner} have showed that it is not always right to allocate spectrum bands to the bidder with the highest bid, as proposed in \cite{Zhou:2008:ESS}, if the sum of the neighbors bids is much higher than the highest bid. Their proposed solution is based on grouping nodes such that nodes with no interference are grouped together. However, their group partition approach is NP-complete under interference constraints \cite{Jain:2003}.

The authors in \cite{Trust} have proposed \textbf{TRUST}, a spectrum trading approach that satisfies some good properties. However, it achieves truthfulness while sacrificing one group of bidders, as it takes the group's bid as the clearing price. In \cite{Strategy-Proof}, the authors have improved the idea of \textbf{TRUST} as they succeeded to achieve truthfulness by only sacrificing one buyer in each group. But, both works \cite{Trust, Strategy-Proof} have inherited \textbf{McAfee} mechanism \cite{McAfee} which requires homogeneous channels. In \cite{Yang:2011:TAC}, the proposed \textbf{TASC} mechanism was the first to consider heterogeneous channels. However, it can reduce the system efficiency as all channels are restricted to a unique clearing price. In \cite{TAHES,TAMES}, \textbf{TASC} mechanism has been extended to consider spectrum reusability and diversity of channel characteristics. In \cite{SPRING}, the authors have proposed a privacy preserving auction for spectrum trading. In \cite{iDEAL, Win-
Coupon}, an auction based framework is purposed. A third party leases its unused resources to service providers to provide dynamic cellular offloading.

In \cite{Zhou:2008:ESS,TAMES}, the authors have exploited frequency interference property. They used interference graph model that makes spectrum allocation, allows spectrum reuse and avoids interference. In \cite{Trust,Strategy-Proof}, the authors have utilized the reusability property by dividing buyers into groups such that buyers in the same group do not interfere with each other. Each group of buyers either wins or loses the same channel.

Most of existing works have failed to consider spectrum bands as non identical bands. Spectrum reusability in an auction design has been first addressed in \cite{Trust}. In \cite{Combinatorial-auction}, the authors have modeled a spectrum auction based on spectrum reusability in a time-frequency division manner. The authors in \cite{SMALL} have also considered spectrum reusability in their auction design by assuming that each spectrum buyer is allowed to have multiple radios. The proposed \textbf{MTSSA} scheme also supports the frequency reusability property. Moreover, \textbf{MTSSA} supports the case of heterogeneous frequency bands.

Beside the properties of secondary price auctions that are beneficial to have in a spectrum auction, i.e. such as incentive compatibility, individual rationality and no positive transfers, it is important to secure the spectrum auction to avoid potential back room dealing. An ideal spectrum auction design would allow the auctioneer to find the best allocation of the frequency bands, determine the winners and their payments while the bidders keep their actual bidding values secret and unknown to the auctioneer. This can prevent frauds made by insincere auctioneers and bid rigging between the auctioneer and the bidders. There has been some previous works in secure spectrum auctions. The authors in \cite{Secure-Multi-agent,Secure-Generalized-2003,Secure-Generalized-2004}, have used homomorphic encryption to secure traditional auction designs. In \cite{Themis}, the authors have considered frequency reuse in their secure spectrum auction design, and propose \textbf{THEMIS}. However, \textbf{THEMIS} does not 
support multi-tier spectrum sharing systems where spectrum reuse is possible among multiple service providers. In these systems a dynamic spectrum sharing approach is required to provide an efficient sharing of the spectrum among multiple service providers.

In this paper we design a truthful secure spectrum auction framework by considering the spectrum spatial reuse property and the heterogeneous propagation properties of different frequency bands. We propose \textbf{MTSSA}, a secure spectrum auction design that provides framework for a multi-tier dynamic spectrum sharing system and optimizes allocating the spectrum resources that are managed by a broker (i.e. the auctioneer). It allows the auctioneer to allocate its under-utilized frequency bands, leased from federal government, to commercial WSPs by running secure spectrum auction. By leveraging Paillier cryptosystem, \textbf{MTSSA} can prevent possible frauds and bid-rigging.

\subsection{Our Contributions}\label{sec:contributions}
The major contributions of the proposed spectrum auction are summarized as:
\begin{itemize}
\item \textbf{MTSSA} considers spectrum reusability and the case of heterogeneous frequency bands, e.g. commercial and federal bands.
\item \textbf{MTSSA} provides a framework for a multi-tier dynamic spectrum sharing system that allows an efficient spectrum sharing of the under-utilized spectrum with commercial WSPs. The auctioneer allocates the under-utilized frequency bands to commercial WSPs' BSs by running a secure spectrum auction. \textbf{MTSSA} optimizes the usage of spectrum resources by managing intra-WSP and inter-WSP interference. In order to account for frequency reusability, the auctioneer divides the network into subnets and auctions the frequency bands in each of the subnets one after another. Each bidder (i.e. BS), maintains a conflict-table. The bidder updates the conflict-table and share his spectrum occupancy status with his neighbors when changes are made.
\item \textbf{MTSSA} provides a truthful auction that is achieved when each bidder submits its true evaluation value. Truthfulness is a dominant strategy for \textbf{MTSSA} as it prevents manipulating the auction.
\item \textbf{MTSSA} uses a payment method that satisfies some essential economic properties such as incentive compatibility, individual rationality and no positive transfers.
\item \textbf{MTSSA} leverages Paillier cryptosystem \cite{Paillier:1999:PKC, Paillier:1999:EPC, Themis} to create a ciphertext for the bidding values. Each BS submits its bidding values through a buffer that creates an encrypted version of the bidding values. While the actual bidding values are kept secret from the auctioneer, the auctioneer is still able to reveal the auction results and charge the bidders securely.
\item \textbf{MTSSA} provides a secure spectrum auction that prevents frauds of insincere auctioneers and bid-rigging while achieving an efficient spectrum utilization, revenue and bidders' satisfaction.
\end{itemize}

The remainder of this paper is organized as follows. In Section \ref{sec:SystemModel}, the spectrum trading architecture is described and the system model for the Multi-Tier spectrum sharing secure auction MTSSA is outlined. In Section \ref{sec:Considerations}, design considerations are presented. We describe the payment method for the proposed auction in Section \ref{sec:Payment}, prove its satisfaction of some desired economic properties in Section \ref{sec:Properties} and describe the design challenges in \ref{DesignChallenges}. In Section \ref{sec:MTSSA}, we present the frequency bands allocation procedure and the encryption design using Paillier cryptosystem for the proposed MTSSA. Simulation set up and performance analysis are discussed in  Section \ref{sec:Simulation}. Section \ref{sec:conclude} concludes the paper.

\section{System Model}\label{sec:SystemModel}

\subsection{Spectrum Trading Architecture}\label{sec:Architecture}
We consider a spectrum trading scenario where the spectrum owner is a federal regulatory agency that leases its under-utilized spectrum on a long-term basis to a broker which manages spectrum assets and plays the role of a middleman for the spectrum owner of the under-utilized spectrum, e.g. federal government, and the WSPs. The architecture of this spectrum assignments is represented through a spectrum pyramid as shown in Figure \ref{fig:pyramid2pdf}. At the top of this pyramid, is the spectrum owner that leases the under-utilized frequency bands to a spectrum broker under certain rules \cite{FCC-2003,FCC-2004,FCC-2010}.
The broker represents a secondary market place that auctions these frequency bands to WSPs. At the bottom of  the pyramid, are the end users devices (i.e. users equipments (UEs)) that are assigned spectrum by the WSPs base stations (BSs). In this paper we focus on designing a secure spectrum auction between the broker (i.e. the auctioneer) and the WSPs base stations to allocate the under-utilized frequency bands.
\begin{figure} [h]
\centering
\includegraphics[height=2in, width=3.5in]{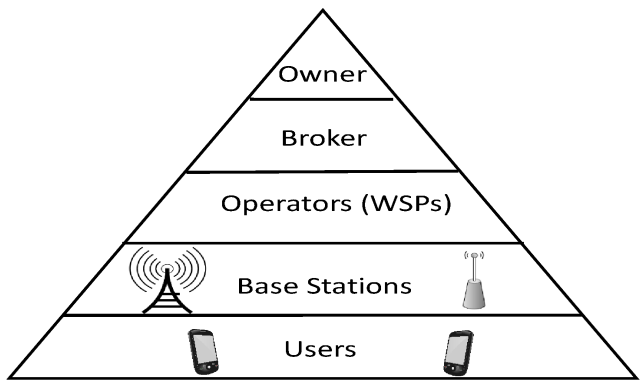}
\caption{A spectrum pyramid that represents an architecture for the under-utilized spectrum assignments.}
\label{fig:pyramid2pdf}
\end{figure}

\subsection{Spectrum Auction Model}\label{sec:Auction Model}
Consider a spectrum auction setting, where one auctioneer (i.e. the broker in Figure \ref{fig:pyramid2pdf}) auctions a set of frequency bands $\mathcal{M}=\{1,2,...,M\}$ to $\mathcal{N}=\{1,2,...,N\}$ bidders (i.e. nodes representing BSs) located in the same geographical region where $\mathcal{N}$ represents a set of all bidders that belong to different WSPs. Let $L$ be the number of WSPs where each WSP has a coverage area within the auction's geographical region. Each WSP (i.e. the $l^{th}$ WSP) provides a mobile wireless service over multiple cellular cells. Its cellular network consists of macro cells and small cells. Within the coverage area of some macro cells, there exist one or more small cells with pico/femto BSs, see Figure \ref{fig:WSPs}. Let $\mathcal{N}^l$ be the set of all of macro cells and small cells BSs that belong to the $l^{th}$ WSP and let $\mathcal{N}$ be a set of all nodes that belong to the $L$ WSPs where $\mathcal{N}=\mathcal{N}^1\cup \mathcal{N}^2\cup... \cup \mathcal{N}^L$ and $N=|\mathcal{N}|$.

In this paper, we consider a multi-band spectrum auction where each node can bid for a single or multiple frequency bands from the set of available frequency bands $\mathcal{M}$  based on its demand. Once the broker leases the spectrum owner's unused frequency bands in $\mathcal{M}$ for a time duration $T$, the broker becomes the owner of the spectrum bands in $\mathcal{M}$. Meanwhile, the interested WSPs submit their bids to the broker. Let the allocation set $\mathcal{K}=\{\alpha^1, \alpha^2,...\}$ denotes the set of all possible allocations of the frequency bands $\mathcal{M}$. For example, given that $\mathcal{M}=\{1,2\}$ and $\mathcal{N}=\{1,2\}$, we have $\mathcal{K}=\{\alpha^1=(\{1,2\},\{\}),\alpha^2=(\{1\},\{2\}),\alpha^3=(\{2\},\{1\}),\alpha^4=(\{\},\{1,2\})\}$, where $(\alpha^1=(\{1,2\},\{\})$ denotes that frequency bands $1$ and $2$ are allocated to bidder $1$ and nothing to bidder $2$. Each node submits its sealed bids $\mathbf{b_n}=[b_n(\alpha^1),b_n(\alpha^2),...]$, e.g.  $\mathbf{b_1}
=[2,1,1,0]$ indicates that node $1$ bids $2$ for allocation $\alpha^1$, $1$ for allocation $\alpha^2$, $1$ for allocation $\alpha^3$ and $0$ for allocation $\alpha^4$. For certain allocation $\alpha=\alpha_n$, each node $n$ has a true evaluation value $v_n(\alpha)$. Let $\mathbf{v_n}=[v_n(\alpha^1),v_n(\alpha^2),...]$ be the true evaluation vector for node $n$. Let $p_n$ represents the price that is charged by the auctioneer to bidder $n$ for allocating the frequency bands. The utility of bidder $n$, denoted by $U_n$, is defined as the difference between the bidder's true evaluation value and the actual price it pays to the auctioneer $p_i$, $U_n=v_n(\alpha)-p_n$, for a specific allocation $\alpha$. The Auctioneer's revenue from the spectrum sales is defined as $R=\sum_{n=1}^{n=N}p_n$. We assume that bidders can submit different bids for different combinations of the frequency bands. Table \ref{table:notations} summarizes some of the notations used in the design.
\begin{table}[ht]
\caption{Key symbols in this paper}
\centering
\begin{tabular}{l p{7cm}}
\hline \\
$\mathcal{M}$ & Frequency bands set \\ 
$\mathcal{N}^l$ & Bidders set for all bidders that belong to WSP $l$ \\
$\mathcal{N}$ & Bidders set of all nodes that belong to the $L$ WSPs, $\mathcal{N}=\mathcal{N}^1\cup \mathcal{N}^2\cup... \cup \mathcal{N}^L$ \\
$\mathcal{K}$ & Allocation set $\mathcal{K}=\{\alpha^1,\alpha^2,...,\alpha^i\}$ \\
$\mathbf{b_n}$ & Node $n$ sealed bids vector for the allocation set $\mathcal{K}$ \\
$\mathbf{v_n}$ & True evaluation vector of BS $n$ \\
$p_n$ & Price charged by the auctioneer to BS $n$ \\
$U_n$ & Bidder's utility, $U_n=v_n(\alpha)-p_n$  \\
$R$ & Auctioneer's revenue, $R=\sum_{n=1}^{n=N}p_n$ \\  \\
\hline 
\end{tabular}
\label{table:notations} 
\end{table}

In Figure \ref{fig:WSPs}, we show two WSPs (i.e. $L=2$) providing service in the same geographical region where the broker performs its spectrum auction. Both WSPs are interested in the auctioneer's frequency bands $\mathcal{M}$. Therefore, both of them participate in the spectrum auction. In the coverage area of each WSP there exists multiple macro cells and small cells managed by that WSP. BSs requesting additional frequency bands submit sealed bidding vectors to the auctioneer via an intermediate secure gateway to participate in the auction of the under-utilized federal spectrum bands. Considering the frequency reuse property \cite{Jain:2003,Trust}, each BS has certain coverage radius (i.e. assume it is equivalent to the cell's radius). Within the coverage radius of the $n^{th}$ BS, non of the interfering BSs can simultaneously use any of the frequency bands that the $n^{th}$ BS is using. However, a non-interfering BS can use the same frequency band that is simultaneously used by a BS located outside its 
coverage radius without causing interference, i.e. frequency reuse is utilized in our model. The auctioneer constructs an interference conflict graph for all the BSs that are participating in the auction.
\begin{figure}
\centering
\includegraphics[height=2in, width=3.5in]{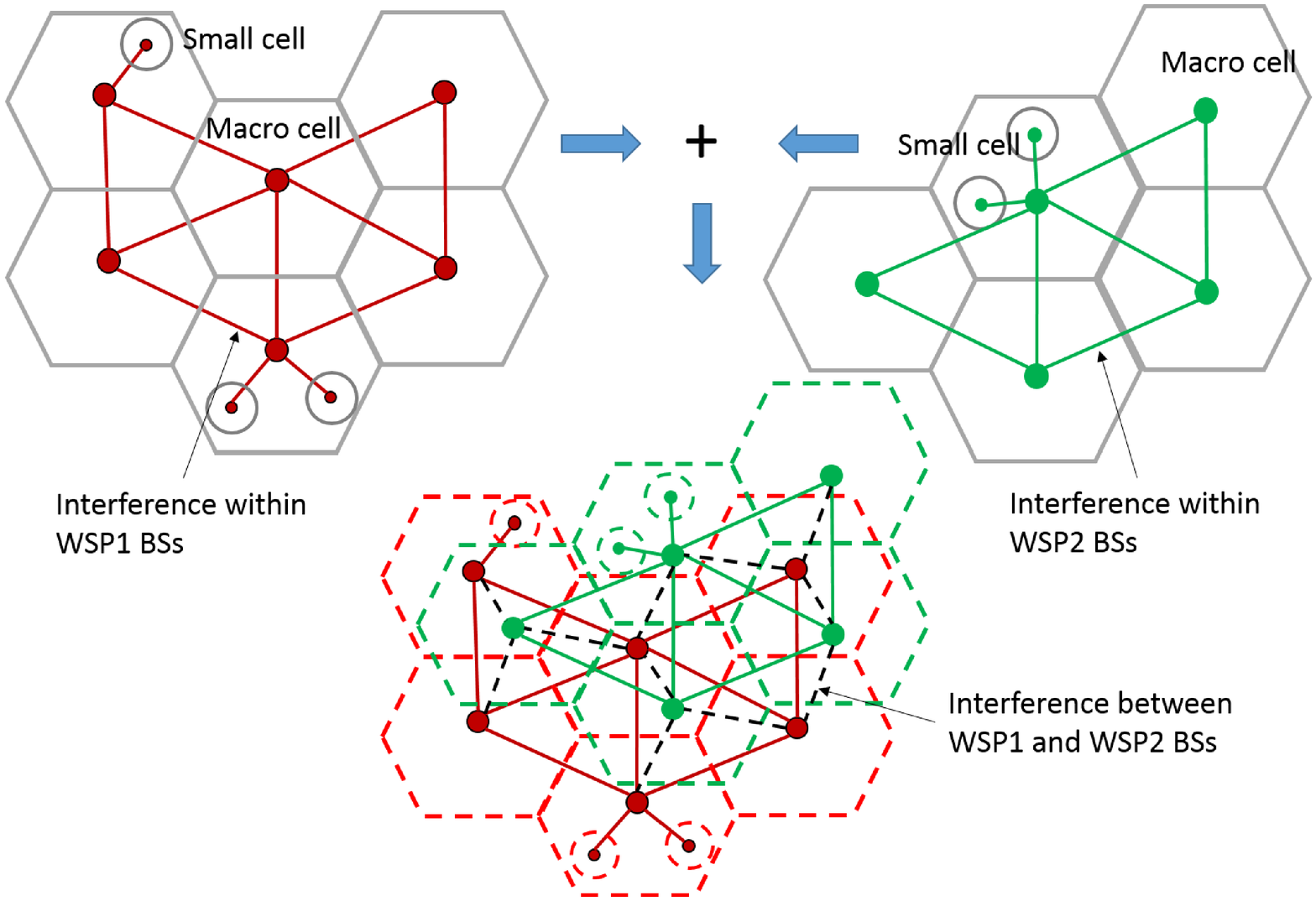}
\caption{Two WSPs with a coverage area within the geographical region where the auction takes place. In each WSP's macro cells and small cells, all the BSs that are interested in the auctioneer's under-utilized frequency bands are part of the interference conflict graph.}
\label{fig:WSPs}
\end{figure}
In Figure \ref{fig:subnets}, we show the frequency conflict graph with all bidders/BSs that belong to the two WSPs, each BS is connected with other BSs located within its coverage radius (i.e. bidder $n$ is connected with all BSs that must not simultaneously use same frequency bands due to interference between them) where the edges represent mutual interference between the corresponding BSs. The interference conflict graph can be constructed using physical or protocol channel model \cite{ahmed_infocom}. It is assumed that there exists a pilot channel, like the one in \cite{Haya_Utility3}, to exchange information between the auctioneer and the BSs or simply by sending that unsecured information with the bids.
\begin{figure}
\centering
\includegraphics[height=2in, width=3in]{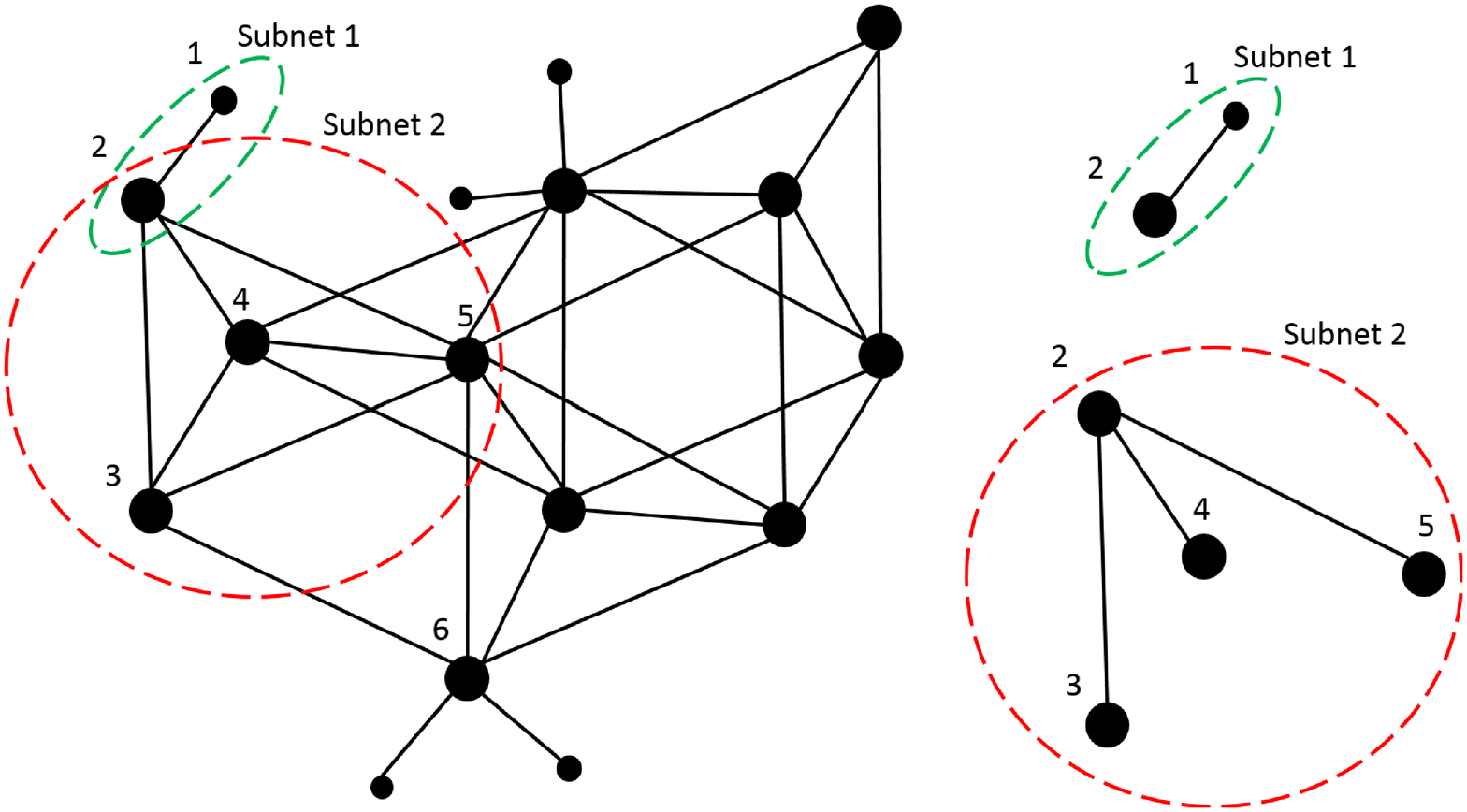}
\caption{Frequency conflict graph for all BSs that belong to the two WSPs shown in Figure \ref{fig:WSPs}. Each node represents one BS and the edges represent mutual interference between the end points (i.e. BSs). Subnet $1$ consists of the small cell's BS (i.e. BS $1$), which represents the root BS for the subnet, and the macro cell's BS (i.e. BS $2$). Subnet $2$ consists of BSs $2$, $3$, $4$ and $5$ where BS $2$ is the root BS.}
\label{fig:subnets}
\end{figure}
Furthermore, the proposed spectrum auction is executed in one subnet after another where a subnet is defined to be a group of BSs that includes one root BS, i.e. BS $n$, and all other BSs that are connected to it through interference edges (i.e. the BSs that have mutual interference with BS $n$) but not previously considered root BSs. Figure \ref{fig:subnets} shows two subnets in the frequency conflict graph of the two WSPs. In Figure \ref{fig:Three-Stage-Auction}, we show the spectrum auction model for \textbf{MTSSA} for two WSPs' participate in the spectrum auction. First, all BSs submit their encrypted bidding vectors via an intermediate gateway that is operated by federal government. The auctioneer then carries out a secure spectrum auction in one subnet after another. It then allocates the winning BSs frequency bands and charges them for the allocated resources.
\begin{figure}
\centering
\includegraphics[height=3.5in]{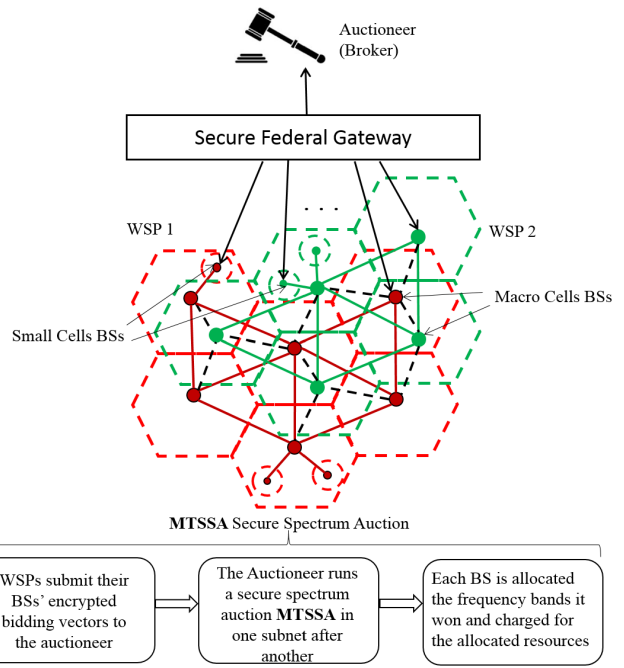}
\caption{Spectrum auction model for the proposed \textbf{MTSSA} with two WSPs' BSs participating in the auction.}
\label{fig:Three-Stage-Auction}
\end{figure}

\section{Design Considerations}\label{sec:Considerations}
In this section, we present the payment method for the proposed auction. We also discuss some economic properties that need to be considered in the design and prove that by using a VCG based auction approach some desired economic properties can be satisfied.
\subsection{The Payment Method}\label{sec:Payment}
Our goal is to use a payment rule that satisfies some of the required economic properties, such as incentive compatibility, individual rationality and no positive transfers. In addition, it is important for the payment rule to provide a satisfactory revenue for the auctioneer. Under certain assumptions, it has been proven that VCG auction satisfies these three economic properties while maximizing the auctioneer's revenue \cite{Incentives}. VCG auction is also proven to be Pareto efficient \cite{Vic61}. In VCG, each bidder submits its true evaluation values regardless of the bidding values that other bidders submit. This is a dominant strategy for the bidder to maximize its utility and win the auction. In our design, we use a payment method that is based on VCG mechanism with Clarke pivot payments \cite{Nisan2007:AGT}. Using this payment rule, each bidder pays the difference between the social welfare with and without his participation (i.e. bidder $n$ pays the externality he causes). Consider the same system 
setup as described in Section \ref{sec:SystemModel}. Each bidder $n$ submits its sealed bidding vector $\mathbf{b_n}$ for the allocation set $\mathcal{K}$. The auctioneer selects a Pareto efficient allocation $\alpha^*\in\mathcal{K}$ where $\alpha^*$ is defined as
\begin{equation} \label{eq:pareto1}
\alpha^* =\arg\max_ {\alpha\in\mathcal{K}}\sum_{n}b_n(\alpha).
\end{equation}
With truthful bidding values, the auctioneer assigns its frequency bands $\mathcal{M}$ based on $\alpha^* $ allocation. Furthermore, let $\alpha^*_{-n}\in\mathcal{K}$ be an allocation without node $n$ participating that is defined as
\begin{equation} \label{eq:pareto2}
\alpha^*_{-n} =\arg\max_ {\alpha\in\mathcal{K}}\sum_{k\neq n}b_k(\alpha).
\end{equation}
The auctioneer charges bidder $n$ a payment $p_n$ that is equivalent to
\begin{equation} \label{eq:payment}
p_n =\sum_{k\neq n}b_k(\alpha^*_{-n})-\sum_{k\neq n}b_k(\alpha^*).
\end{equation}
Then, the utility of bidder $n$ can be expressed as
\begin{equation} \label{eq:utility}
\begin{aligned}
U_n =
& \;v_n(\alpha^*)-p_n \\
=
& \;v_n(\alpha^*)-(\sum_{k\neq n}b_k(\alpha^*_{-n})-\sum_{k\neq n}b_k(\alpha^*))\\
=
& \;[v_n(\alpha^*)+\sum_{k\neq n}b_k(\alpha^*)]-\sum_{k\neq n}b_k(\alpha^*_{-n}).
\end{aligned}
\end{equation}

\subsection{Desired Economic Auction Properties}\label{sec:Properties}
It is essential for an auction to have certain economic properties. First, we discuss these economic properties and then we prove that by using a VCG based auction approach these properties can be satisfied.

\begin{enumerate}
  \item Incentive Compatibility (truthfulness): An auction is incentive compatible if non of the bidders can get higher utility by not reporting its true evaluation vector. Based on this property, a dominant strategy for any bidder is to declare its true evaluation value regardless of what the other bidders do.
  \item Individual Rationality: An auction is individually rational if the utility $U_n$   for each bidder $n$ is greater or equal zero (i.e. $U_n\geq0$). Meaning that the winning bidders obtain non-negative utility (i.e. bidders do not pay more than their evaluation values) from the auction and no one suffer as a result of participating in the auction.
  \item No Positive Transfers: In auctions with no positive transfers, the payment of any bidder $n$ must be greater or equal zero (i.e. $p_n\geq0$). This prevents situations when the auctioneer has to pay the bidders.
\end{enumerate}

In Lemma \ref{compatibility}, Lemma \ref{rationality} and Lemma \ref{PositiveTransfer}, we show that by using the payment method and the VCG based auction approach discussed above, the aforementioned desired economic properties can be satisfied.
\begin{lem}
\label{compatibility}
Let $\mathbf{v_n}$ and $\mathbf{v_n^{'}}\neq \mathbf{v_n}$ be the $n^{th}$ bidder bidding vector when it is equivalent to its true evaluation values and any other values, respectively, and let $\alpha^*$ and ${\alpha^*}^{'}$ be the allocations that maximize the social welfare when $\mathbf{v_n}$ and $\mathbf{v_n^{'}}$ are declared, respectively. Then, for the $n^{th}$ bidder, the utility $U_n \geq {U^{'}}_n$.
\end{lem}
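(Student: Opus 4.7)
The plan is to exploit the Clarke pivot structure of the payment rule in equation (\ref{eq:payment}) together with the rewritten form of the utility in equation (\ref{eq:utility}). The key observation is that the subtracted term $\sum_{k\neq n} b_k(\alpha^*_{-n})$ depends only on the bids submitted by bidders other than $n$, and is therefore \emph{identical} whether bidder $n$ submits $\mathbf{v_n}$ or $\mathbf{v_n'}$. Consequently, comparing $U_n$ to $U_n'$ reduces to comparing only the bracketed quantity $v_n(\alpha) + \sum_{k\neq n} b_k(\alpha)$ evaluated at $\alpha^*$ versus at ${\alpha^*}'$.

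Concretely, I would proceed as follows. First, write out $U_n$ using the third form of (\ref{eq:utility}), obtaining $U_n = [v_n(\alpha^*) + \sum_{k\neq n} b_k(\alpha^*)] - \sum_{k\neq n} b_k(\alpha^*_{-n})$. Next, write the analogous expression for the utility under the untruthful bid $\mathbf{v_n'}$, which is $U_n' = v_n({\alpha^*}') - p_n' = [v_n({\alpha^*}') + \sum_{k\neq n} b_k({\alpha^*}')] - \sum_{k\neq n} b_k(\alpha^*_{-n})$; here one must carefully note that bidder $n$'s actual welfare is still computed with the true value $v_n$, not with the reported $\mathbf{v_n'}$, and that the pivot term is unchanged because $\alpha^*_{-n}$ is determined from the bids of bidders $k\neq n$ only. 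Subtracting, the pivot terms cancel and one obtains
\begin{equation*}
U_n - U_n' = \Bigl[v_n(\alpha^*) + \sum_{k\neq n} b_k(\alpha^*)\Bigr] - \Bigl[v_n({\alpha^*}') + \sum_{k\neq n} b_k({\alpha^*}')\Bigr].
\end{equation*}

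Finally, I would invoke the defining property of $\alpha^*$ from (\ref{eq:pareto1}): when bidder $n$ reports truthfully, $b_n(\alpha) = v_n(\alpha)$, so $\alpha^*$ is precisely the allocation that maximizes $v_n(\alpha) + \sum_{k\neq n} b_k(\alpha)$ over $\alpha \in \mathcal{K}$. Since ${\alpha^*}' \in \mathcal{K}$ is a feasible candidate in this maximization, the bracketed quantity at $\alpha^*$ is at least as large as at ${\alpha^*}'$, hence $U_n - U_n' \geq 0$, which establishes the claim.

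I do not anticipate a genuine technical obstacle; the proof is essentially the classical Vickrey--Clarke--Groves argument. The only subtlety worth being careful about in the write-up is to make explicit that (i) the pivot term $\sum_{k\neq n} b_k(\alpha^*_{-n})$ is invariant under changes in bidder $n$'s own report, and (ii) the utility $U_n'$ mixes the \emph{true} valuation $v_n$ with the allocation ${\alpha^*}'$ induced by the \emph{false} bid, so that one never confuses $v_n(\cdot)$ with $v_n'(\cdot)$ when invoking the optimality of $\alpha^*$.
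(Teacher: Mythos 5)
Your proposal is correct and follows essentially the same route as the paper's own proof: the same Clarke-pivot decomposition of $U_n$, cancellation of the report-independent term $\sum_{k\neq n} b_k(\alpha^*_{-n})$, and the optimality of $\alpha^*$ for the social-welfare sum (the paper's inequality~(\ref{eq:welfare})) to conclude $U_n \geq U'_n$. The only difference is cosmetic: you keep the other bidders' bids as arbitrary $b_k(\cdot)$, making the dominant-strategy quantification explicit, whereas the paper writes them as $v_k(\cdot)$, implicitly taking the others' reports as given; the argument is identical.
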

\begin {proof}
Using the utility definition and payment method in Section \ref{sec:Payment}, the utility of bidder $n$ is $U_n=v_n(\alpha^*)+\sum_{k\neq n}v_k(\alpha^*)-\sum_{k\neq n}v_k({\alpha}^*_{-n})$ when declaring $\mathbf{v_n}$ whereas the utility of bidder $n$ is ${U^{'}}_n={v}_n({\alpha^*}^{'})+\sum_{k\neq n}{v}_k({\alpha^*}^{'})-\sum_{k\neq n}{v}_k({\alpha}^*_{-n})$ when declaring $\mathbf{v^{'}}_n$. Since $\alpha^*$ is the allocation that maximizes the social welfare, we have the following inequality:
\begin{equation} \label{eq:welfare}
\sum_n v_n(\alpha^*)\geq \sum_n v_n({\alpha^*}^{'}).
\end{equation}
Now, by subtracting the term $\sum_{k\neq n}{v}_k({\alpha}^*_{-n})$ from both sides of equation (\ref{eq:welfare}), we get $U_n\geq {U^{'}}_n$ which is the incentive compatibility property.
\end{proof}

\begin{lem}
\label{rationality}
Let $\alpha^*$ and ${\alpha^{*}}_{-n}$ be the allocations that maximize the social welfare with and without node $n$'s participation, respectively, with the assumption that each bidder submits its true evaluation values. Then each bidder $n$ do not suffer as a result of participating in the auction and the auction's winners do not pay more than their evaluation values (i.e. $U_n\geq 0$).
\end{lem}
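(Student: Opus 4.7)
The plan is to unfold the VCG-with-Clarke-pivot utility from equation (\ref{eq:utility}) under the truthfulness assumption (so $b_k=v_k$ for every $k$), and then exploit the very definition of $\alpha^*$ in (\ref{eq:pareto1}) as a social-welfare maximizer to bound $U_n$ from below by a non-negative quantity.

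Concretely, I would first substitute $b_k=v_k$ into (\ref{eq:utility}) to write
\begin{equation*}
U_n \;=\; v_n(\alpha^*) + \sum_{k\neq n} v_k(\alpha^*) \;-\; \sum_{k\neq n} v_k(\alpha^*_{-n}) \;=\; \sum_{k} v_k(\alpha^*) \;-\; \sum_{k\neq n} v_k(\alpha^*_{-n}).
\end{equation*}
Since $\alpha^*$ maximizes $\sum_k v_k(\alpha)$ over $\mathcal{K}$ and $\alpha^*_{-n}\in\mathcal{K}$, I have the key inequality $\sum_{k} v_k(\alpha^*) \geq \sum_{k} v_k(\alpha^*_{-n})$. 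Splitting the right-hand side as $v_n(\alpha^*_{-n}) + \sum_{k\neq n} v_k(\alpha^*_{-n})$ and plugging this bound into the expression for $U_n$ gives $U_n \geq v_n(\alpha^*_{-n})$.

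Finally, since each true valuation $v_n(\cdot)$ for a frequency-band allocation is inherently non-negative (a bidder never values a bundle of spectrum negatively in the model of Section \ref{sec:SystemModel}), we conclude $v_n(\alpha^*_{-n})\geq 0$ and therefore $U_n\geq 0$, which is the individual rationality claim.

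The only subtle point, and really the only place where a reader could push back, is the last step: the argument implicitly relies on the modeling assumption that $v_n(\alpha)\geq 0$ for every $\alpha\in\mathcal{K}$, i.e.\ that a BS never derives negative utility from simply being assigned some subset of bands. I would therefore explicitly state this as a standing assumption (it is consistent with the definitions in Section \ref{sec:Auction Model} where utilities are defined as valuation minus price and bids are implicitly non-negative). Aside from this modeling caveat, the argument is a short algebraic manipulation plus one appeal to the optimality of $\alpha^*$, so I do not anticipate any real technical obstacle.
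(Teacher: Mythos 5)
Your proof is correct and takes essentially the same route as the paper's: both unfold the Clarke-pivot utility under truthful bidding and combine the optimality of $\alpha^*$ over $\mathcal{K}$ (with $\alpha^*_{-n}\in\mathcal{K}$) with the non-negativity of valuations, the only difference being that you apply the two inequalities in the opposite order (optimality first, yielding $U_n \geq v_n(\alpha^*_{-n})$, then non-negativity). Your explicit flagging of the standing assumption $v_n(\alpha)\geq 0$ is a genuine improvement in rigor, since the paper invokes it silently in the step $\sum_{j}v_j(\alpha^*_{-n})\geq\sum_{k\neq n}v_k(\alpha^*_{-n})$.
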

\begin {proof}
To show individual rationality, consider the utility of node $n$:
\begin{align*}
U_n =
&  \;v_n(\alpha^*)+\sum_{k\neq n}v_k(\alpha^*)-\sum_{k\neq n}v_k(\alpha^*_{-n})\\
\geq
&  \;\sum_{j}v_j(\alpha^*)-\sum_{j}v_j(\alpha^*_{-n})\\
\geq
&  \;0.
\end{align*}
The first inequality holds since $v_n(\alpha^*)+\sum_{k\neq n}v_k(\alpha^*)=\sum_{j}v_j(\alpha^*)$, $\sum_{j}v_j(\alpha^*_{-n})\geq\sum_{k\neq n}v_k(\alpha^*_{-n})$ and $\sum_{j}v_j(\alpha^*_{-n})\geq 0$. The second inequality holds because $\alpha^*$ is the allocation that maximizes the social welfare,  $\sum_{j}v_j(\alpha^*)$.
\end{proof}

\begin{lem}
\label{PositiveTransfer}
As a result of using the payment method in Section \ref{sec:Payment}, the auction has no positive transfers (i.e. $p_n\geq 0$ for each bidder $n$).
\end{lem}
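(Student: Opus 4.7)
The plan is to deduce $p_n \geq 0$ directly from the defining property of $\alpha^*_{-n}$ in equation (\ref{eq:pareto2}), with essentially no further work. First I would recall that $\alpha^*_{-n}$ is by construction the allocation in $\mathcal{K}$ that maximizes the social welfare of all bidders other than $n$, i.e.\ the quantity $\sum_{k\neq n} b_k(\alpha)$. Since $\alpha^*$ from (\ref{eq:pareto1}) also lies in $\mathcal{K}$, it is a feasible competitor in the maximization that defines $\alpha^*_{-n}$.

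Next I would invoke optimality of $\alpha^*_{-n}$ to conclude $\sum_{k\neq n} b_k(\alpha^*_{-n}) \geq \sum_{k\neq n} b_k(\alpha^*)$, rearrange this as $\sum_{k\neq n} b_k(\alpha^*_{-n}) - \sum_{k\neq n} b_k(\alpha^*) \geq 0$, and then observe that the left-hand side is literally the payment $p_n$ as given by the payment rule (\ref{eq:payment}). This closes the argument in one line.

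I do not anticipate any real obstacle: the lemma is a direct consequence of the argmax definition of $\alpha^*_{-n}$. The only subtlety worth flagging is that the same allocation set $\mathcal{K}$ appears in both (\ref{eq:pareto1}) and (\ref{eq:pareto2}), so $\alpha^*$ is an admissible candidate in the maximization that defines $\alpha^*_{-n}$; without this shared feasible set the comparison would be illegitimate. Notably, the argument does not require individual bids $b_k(\cdot)$ to be non-negative, only that $\alpha^*_{-n}$ is welfare-optimal among competitors of $n$, so no extra assumption on the bid vectors is needed.
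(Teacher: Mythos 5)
Your proposal is correct and follows essentially the same argument as the paper: both deduce $p_n \geq 0$ in one line from the fact that $\alpha^*_{-n}$ maximizes $\sum_{k\neq n} b_k(\alpha)$ over $\mathcal{K}$, so its value dominates that attained at $\alpha^*$. Your explicit remark that $\alpha^*$ lies in the same feasible set $\mathcal{K}$ (making it a legitimate competitor in the maximization) is a useful clarification that the paper leaves implicit, but it is not a different approach.
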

\begin{proof}
From equation (\ref{eq:payment}), we have $p_n =\sum_{k\neq n}b_k(\alpha^*_{-n})-\sum_{k\neq n}b_k(\alpha^*) \geq 0$, since $\alpha^*_{-n}$ is the allocation that maximizes the social welfare without the $n^{th}$ bidder participation, $\sum_{k\neq n}b_k(\alpha^*_{-n})$.
\end{proof}
\subsection{Design Challenges} \label{DesignChallenges}
Truthfullness is one of the important properties that needs to be taking into consideration when designing a spectrum auction. Sealed secondary price auction and VCG auction are very preferable as they guarantee that bidders submit their true evaluation values. As mentioned before, VCG auction has many properties that are essential to have in a spectrum auction. However, VCG requires finding an optimal allocation which is NP-complete because of the spectrum spatial reusability property. In addition, VCG is vulnerable to frauds of the auctioneer and bid-rigging between the insincere auctioneer and greedy bidders \cite{Themis}. Therefore, VCG auction can not be used in a spectrum auction without countermeasures for fraud and bid-rigging. 

Bid-rigging between a greedy bidder and an auctioneer can occur for the benefit of both. Since the auctioneer is aware of all bidders' bidding values, he can collude with a greedy bidder and reveal the winning bid value to him. In Figure \ref{fig:challenges}, we show an example of a spectrum auction where the auctioneer auctions one frequency band $\mathcal{|M|}=1$ to four BSs (i.e. subnet $2$ of the frequency conflict graph that is shown in Figure \ref{fig:subnets}). The auctioneer runs a VCG auction that is equivalent to a sealed secondary price auction for one frequency band auction. In Figure \ref{fig:Bid-rigging}, we show an example of bid-rigging. Bidder $4$ is the winner and bidder $2$ is a greedy bidder who colludes with the auctioneer and learns about the highest bid. As a result, bidder $2$ bids a value that is higher than his true evaluation but a little bit less than the highest bid. By doing so, the auctioneer considers the bidding value of bidder $2$ to be the charging price for the winner (i.e.
 bidder $4$). By such a bid-rigging action, the auctioneer can make more profit and share the spoils with bidder $2$.

On the other hand, a fraud occurs when an insincere auctioneer overcharges the winner in order to increase his own profit. This results is an unexpected bad utility for the winner. In Figure \ref{fig:fraud}, bidder $4$ is the winner and the charging price should be $7$ which is equivalent to the second highest bid. However, the insincere auctioneer charges bidder $4$ at $7.9$ to obtain higher revenue. This is possible since all bidding values are sealed and bidders do not know about the bidding values of each other.

To avoid possible bid-rigging and frauds, a successful spectrum auction design needs to take into consideration securing the auction by making the auctioneer able to decide how to allocate the frequency bands while keeping the bidders actual bidding values unknown to the auctioneer. This is essential to avoid possible back-room dealing and ensure a secure spectrum auction.
\begin{figure}[h]
  \centering
  \subfigure[Bid-rigging between an insincere auctioneer and a greedy bidder]{%
  \label{fig:Bid-rigging}
  \includegraphics[height=2in, width=3.5in]{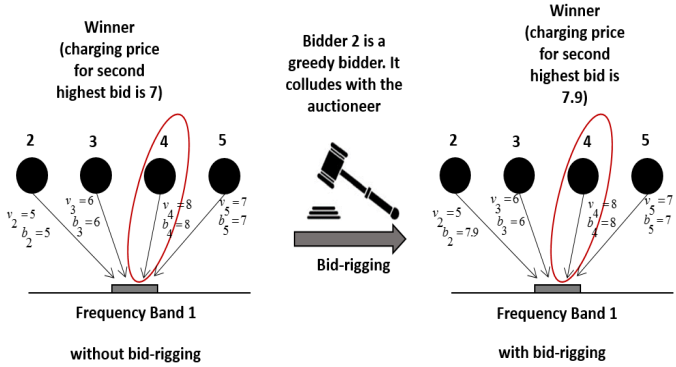}
  }\\%
\subfigure[Frauds of an insincere auctioneer]{%
  \label{fig:fraud}
  \includegraphics[height=1.4in, width=3in]{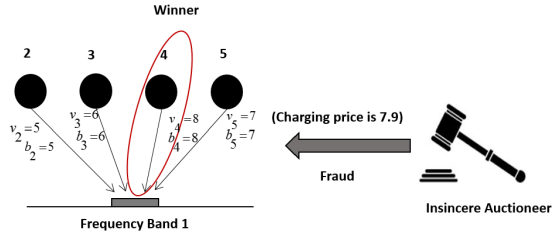}
  }%
\caption{Examples of bid-rigging and frauds in an unsecured spectrum auction of one frequency band and four BSs.}
\label{fig:challenges}
\end{figure}
\section{\textbf{MTSSA}: Secure Spectrum Auction Design} \label{sec:MTSSA}

In order to enable an efficient usage of the under-utilized shared spectrum managed by a broker. It is important to design a secure spectrum auction that allows the broker to provide a sufficient level of security, privacy and obfuscation to enable a reliable and efficient usage of the shared spectrum. In order to thwart back-room dealing, it is essential to have a mechanism that allows the auctioneer to find the maximum bid among all bidders without knowing their actual bids. The proposed \textbf{MTSSA} leverages Paillier cryptosystem to avoid possible frauds and bid-rigging.
In this section, we first describe Paillier cryptosystem and point out its special features. We then discuss \textbf{MTSSA} frequency bands allocation procedure. Finally, we present the security part of \textbf{MTSSA}.
\subsection{Paillier Cryptosystem}
Some of Paillier cryptosystem properties are essential for our secure spectrum auction design. Paillier cryptosystem \cite{Paillier:1999:PKC,Paillier:1999:EPC, Themis} is a probabilistic public key encryption system, i.e. the term probabilistic encryption indicates that when encrypting the same plaintext for multiple times it yields different ciphertexts, that satisfies special features such as homomorphic addition, indistinguishability and self blinding.

The homomorphic properties of Paillier cryptosystem provide it with a notable feature. As the encryption function of a message $m$, is given by $C(m)$, is additively homomorphic. i.e. $C(m_1+m_2)=C(m_1)C(m_2)$.
On the other hand, with the indistinguishability property of Paillier cryptosystem, if the plaintext $m$ is encrypted twice, the two created cyphertexts are different from each other and no one can distinguish the original plaintexts, except by random guessing, unless decrypting the original ciphertexts.
The self blinding property allows changing the ciphertext publicly without affecting the plaintext. Therefore, from the ciphertext $C(m)$, it is possible to compute a different randomized ciphertext $C^{'}(m)$ without knowing the decryption key or the original plaintext.

\subsection{Frequency Bands Allocation Procedure}\label{sec:Allocation}
All BSs that are interested in the auction and belong to the WSPs within the geographical region of the auction submit their bidding values to participate in the auction. Based on the location of these BSs  and which WSPs they belong to, the auctioneer creates an interference conflict graph (i.e. like the one in Figure \ref{fig:subnets}). The auctioneer executes the auction in one subnet at a time. For each subnet, the auctioneer selects a random BS $n\in \mathcal{N}$ to be the current root BS and considers its corresponding subnet, i.e. connected nodes/BSs. After solving for the current subnet, the auctioneer selects a new BS, that has not been a root BS before, to be the new root BS and excludes any previous root BS from its subnet along with the allocated frequency bands to these BSs. Following the same procedure, the auctioneer continues to execute the auction in one subnet after another until each BS has participated in the auction. Based on the subnet auction results, the auctioneer allocates the 
corressonding root node the frequency bands and charges it for the allocated resources. Each Bidder that is participating in the auction maintains a local conflict-table, i.e. as in \cite{multi-winner}. The bidder updates his bidding values if any BS within his interference range (i.e. connected to him in the interference conflict graph) wins frequency bands.

The \textbf{MTSSA} procedure is presented in the following steps:

$\mathbf{1.}$ \textbf{Each BS $n\in \mathcal{N}$ sets up its conflict-table and submits its encrypted version of bidding values $\mathbf{b_n}$:} Each WSP $l$ within the auctioneer's geographical region creates a set of all BSs $\mathcal{N}^l$ that are interested in bidding for the auctioneer's under-utilized frequency bands. Each BS $n\in \mathcal{N}^l$ creates a conflict-table with all the interfering BSs denoted by $\mathcal{I}_n$ (i.e. $\mathcal{I}_n$ is a set of all BSs that are located within the coverage area of BS $n$). Each WSP $l$ provides an update to each BS $n\in \mathcal{N}^l$ regarding its interfering  BSs $\mathcal{I}_n$\footnote{It is assumed that each WSP $l$ is aware of all BSs in its coverage area within the auction's geographical region, whether they belong to it or to other WSPs. Therefore, the set of interfering BSs $\mathcal{I}_n$ includes all BSs within the coverage area of BS $n$ that belong to WSP $l$ as well as BSs that belong to other WSPs.}. Each $n\in \mathcal{N}^l$ creates its 
bidding vector $\mathbf{b_n}$ that will be an input to a buffer that encrypts the bidding values. The encrypted bids are then submitted to federal gateway for randomization, see Section \ref{sec:MTSSA-Secure}, then sent to auctioneer, see Figure \ref{fig:Three-Stage-Auction}. Neither the auctioneer nor the other bidders know the actual bidding values $\mathbf{b_n}$ that BS $n$ has submitted. We show in Section \ref{sec:PaillierBidding} the procedure of encrypting the bidding values using Paillier encryption.

$\mathbf{2.}$ \textbf{Start with a random BS $n\in \mathcal{N}$ and consider its corresponding subnet:} The auctioneer does not have an optimal choice regarding which subnet it starts the auction from in order to maximize his revenue. Therefore, the auctioneer selects a random bidder $n$ from the set $\mathcal{N}$ and considers its corresponding subnet (i.e. a subnet consists of a root BS $n$ and all other BSs connected to BS $n$ in the interference conflict graph except BSs that have been previously considered root BSs).

$\mathbf{3.}$ \textbf{The auctioneer carries out a secure spectrum auction in the subnet of the selected BS $n$:} The auctioneer performs a secure spectrum auction procedure (detailed in Section \ref{sec:MTSSA-Secure}) in the current subnet under consideration.

$\mathbf{4.}$ \textbf{Allocate frequency bands and charge price:} Based on the subnet auction's results, the auctioneer allocates the root BS frequency bands and charges it for the allocated resources. The allocation and the payment vary based on the location of root BS and its relative bid with respect to neighboring BSs. Each winning BS stores its allocated frequency bands and its charging price in the conflict-table. 

$\mathbf{5.}$ \textbf{Proceed to next root BS:} A new root BS is selected based on a random selection done by auctioneer and the corresponding subnet secure bids are sent to auctioneer and the process is repeated starting from step $\mathbf{2}$.

Algorithm \ref{alg:MTSSA-Alg} summarizes $\textbf{MTSSA}$ spectrum auction procedure.

\begin{algorithm}[]
\caption{\textbf{MTSSA} Frequency Bands Allocation}\label{alg:MTSSA-Alg}
\begin{algorithmic}
	\STATE {$\mathcal{N}=\mathcal{N}^1 \cup \mathcal{N}^2...\cup \mathcal{N}^L$} 
	\COMMENT {i.e. $\mathcal{N}$ is the set of all BSs in the interference conflict graph}
	\STATE {$\mathcal{N}_0=\mathcal{N}$} 
	\STATE {Auctioneer generates his private and public keys of Paillier cryptosystem} 
	\STATE {Auctioneer sends public key and element $x$ to all BSs via pilot channel} 
        \WHILE{$\mathcal{N}\; != \phi$}
		\STATE{$n=random(\mathcal{N})$} 
		\COMMENT{Auctioneer selects a random BS}
		\STATE{$\mathcal{N}_n = include\_conflict(n)$}
		\COMMENT{Auctioneer adds BSs that form the $n^{th}$ subnet from conflict-table of $n^{th}$ BS as root} 
		\STATE{$\mathcal{N}_{-} =(\mathcal{N}_0 \setminus \mathcal{N})\cap \mathcal{N}_n$}
		\COMMENT{$\mathcal{N}_{-}$ is set of previous root BSs in the $n^{th}$ subnet}
		\STATE{$\mathcal{N}_n = \mathcal{N}_n \setminus \mathcal{N}_{-} $}
		\COMMENT{Auctioneer removes from $\mathcal{N}_n$ previous root BSs}
		\STATE{$\mathcal{M}_{-} = include\_alloc(\mathcal{N}_{-})$}
		\COMMENT{$\mathcal{M}_{-}$ is set of freq. bands allocated to $\mathcal{N}_{-}$}
		\STATE{$\mathcal{M}_n = \mathcal{M} \setminus \mathcal{M}_{-} $}
		\COMMENT{Auctioneer removes from $\mathcal{M}$ freq. bands allocated to $\mathcal{N}_{-}$}
		\STATE{$\mathcal{K}_n = alloc\_vect(\mathcal{N}_n, \mathcal{M}_n)$}
		\COMMENT{Auctioneer forms allocation vector $\mathcal{K}_n$ and sends to $\mathcal{N}_n$} 
                \STATE{BSs $\in \mathcal{N}_n$ send encrypted bids to federal gateway}
                \STATE{Federal gateway randomizes bids and forward to auctioneer}
                \STATE{Auctioneer selects the highest allocation $\alpha^{\star}$}
                \STATE{Auctioneer charges price $p_n$ to BS $n$}
                \STATE{$\mathcal{N}=\mathcal{N} \setminus \{n\}$}
       \ENDWHILE
\end{algorithmic}
\end{algorithm}

\subsection{Secure Spectrum Auction Using Paillier Cryptosystem}\label{sec:MTSSA-Secure}
In order for {\textbf{MTSSA} to ensure a secure auction, it is important to design {\textbf{MTSSA} such that no way for the auctioneer to manipulate the auction. VCG auction is proven to have the incentive compatibility property from the bidders side which is essential for our design. In order for {\textbf{MTSSA} to prevent the auctioneer from conducting any frauds or bid-rigging \cite{Themis}, it is important to limit the auctioneer's capability by making him only able to exploit the winners and their payments without knowing the actual bidding values. So, by leveraging Paillier cryptosystem in our design, {\textbf{MTSSA} can ensure a secure spectrum auction. Next we discuss in details how both the bidders and the auctioneer need to collaborate in order to carry out a secure spectrum auction.

\subsubsection{Impact of Paillier Cryptosystem on the Bidding Values}\label{sec:PaillierBidding}

\indent \textbf{Encrypting the Bidding Values:} Each bidder submits its bidding values through a buffer that uses Paillier cyptosystem to encrypt the bidding values and create a vector of ciphertexts for each bidding value. Let $s$ be a number that any actual bidding value does not exceed and let $z=b(\alpha)$ be the actual bidding value for allocation $\alpha$ such that $1\leq z\leq s$. Let the vector of ciphertexts for $z$ be $\mathbf{c}(z)$ that is given by
\begin{equation} \label{eq:siphertexts}
\begin{aligned}
\textbf{c}(z)=
&  \;(c^1,...,c^s) \\
=
&  \;(\underbrace{C(x),...,C(x)}_z,\underbrace{C(0),...,C(0)}_{s-z}),
\end{aligned}
\end{equation}
where $C(x)$ is the Paillier encryption of the public element $x$ (i.e. $x\neq 0$) and $C(0)$ is the Paillier encryption of $0$. As mentioned before, $C$ has a self blinding property which makes $z$ undeterminable without decrypting the elements in  $\textbf{c}(z)$.\\

\textbf{Selecting the Maximum Bidding Value:} The auctioneer can determine the bidder with the maximum bidding value from the encrypted bidding values without knowing their actual values. Let $\mathbf{c}(z_j)=(c_j^1,...,c_j^s)$ be the encrypted bidding vector of bidder $n$ for certain allocation $\alpha$. First, consider the product of all encrypted bidding vectors for allocation $\alpha$,
\begin{equation}
\prod_{j} \mathbf{c}(z_j)= (Q_1,...,Q_s) = (\prod_{j} c_j^1,...,\prod_{j} c_j^s).
\end{equation}

Due to the homomorphic addition property of Paillier cryptosystem, $Q_i$ (i.e. $1 \leq i \leq s$ and $i \neq j$) is equivalent to
\begin{equation}
Q_i=\prod_{j} c_j^i = C^{\gamma(i)}(x) = C(\gamma(i)x),
\end{equation}
where $\gamma(i)$ represents the number of values that are equal to or greater than $i$, i.e. $\gamma(i)=|\{j : z_j \geq i\}|$. Given that $\gamma(i)$ monotonically decreases when $i$ increases, one way to find the maximum of these bidding values is to decrypt $Q_i$ and check whether the decrypted value $C^{-1}(Q_i)$ equals $0$ or not for $i$ changing from $s \rightarrow 1$. Once the largest $i$ with a decrypted value $C^{-1}(Q_i) \neq 0$ is found, then the maximum bidding value for the allocation $\alpha$ is determined to be $i$ (i.e. $i=\max \{z_j\}$).\\

\textbf{Randomizing the Encrypted bidding Values:} Without knowing $z$, the federal gateway adds a constant $t$ to the encrypted vector $\mathbf{c}(z)$ and randomizes the rest of its elements. This results in the following vector
\begin{equation}
\mathbf{c^{'}}(z+t)=(\underbrace{C(x),...,C(x)}_z,c_1^{'},...,c_{s-z}^{'}),
\end{equation}
where $t$ can not be obtained from either $\mathbf{c}(z)$ or $\mathbf{c^{'}}(z+t)$ because of the self blinding property of Paillier cryptosystem. In addition, $t$ can not be figured out by comparing $\mathbf{c}(z)$ and $\mathbf{c}(z+t)$ during the shifting and randomizing process.

\subsubsection{Securing the \textbf{MTSSA} Subnet Auction}
By using Paillier cryptosystem as discussed in Section \ref{sec:PaillierBidding}, with encrypted bidding values it is still possible to find the maximum bid and the encrypted bidding vectors are randomized without knowing their actual values. This makes it possible to apply a VCG based auction in each subnet. As mentioned before, the proposed \textbf{MTSSA} auction is carried out in one subnet after another. In certain subnet, \textbf{MTSSA} auction  is performed as follows:

$\mathbf{1.}$ The auctioneer generates his private and public keys of Paillier cryptosystem and publishes his public key and element $x$ over the pilot channel.

$\mathbf{2.}$ Each BS submits its sealed bidding vector $\mathbf{b_n}=\mathbf{v_n}$ (i.e. its true evaluation values since we consider a VCG based auction). The auctioneer creates representing vectors $\mathbf{C}_T=\mathbf{C}(O)$, $\mathbf{C}_1=\mathbf{C}(O)$,...,$\mathbf{C}_N=\mathbf{C}(O)$ where $N$ is the number of bidders (i.e. BSs), the initial $O(\alpha)$ equals $0$ and the size of vector $\mathbf{C}$ equals $|\mathcal{K}|$. In order for bidder $z$ to keep his bidding value $b_z$ secret, he adds his encrypted bidding value to all of the representing vectors except $\mathbf{C}_z$. Once all bidders are done performing this addition, the auctioneer obtains
\begin{equation}\label{eq:representing-E}
\mathbf{C}_T=(\prod_{n} \mathbf{c}(b_n(\alpha_1)),...,\prod_{n} \mathbf{c}(b_n(\alpha_{|\mathcal{K}|}))).
\end{equation}

Due to the homomorphic addition property of Paillier cryptosystem, equation (\ref{eq:representing-E}) is equivalent to
\begin{equation}\label{eq:representing-E-homo}
\begin{aligned}
\mathbf{C}_T= (\mathbf{c}(\sum_{n} b_n(\alpha_1)),...,\mathbf{c}(\sum_{n} b_n(\alpha_{|\mathcal{K}|}))) = \mathbf{C}(\sum_{z} b_z),
\end{aligned}
\end{equation}
and
\begin{equation}
\mathbf{C}_z=\mathbf{C}(\sum_{n\neq z} b_z) \;\;\; 1\leq z \leq N.
\end{equation}

$\mathbf{3.}$ Federal gateway adds $\theta(\alpha)=t$ to $\mathbf{C}_T, \mathbf{C}_1, ..., \mathbf{C}_N$ to obtain $\mathbf{C}(\sum_{n} b_n+\theta)$ and $\mathbf{C}(\sum_{n \neq z} b_n+\theta) \forall z$. It sends these randomized encrypted bids to auctioneer.\\

$\mathbf{4.}$ In order for the auctioneer to select an allocation for the current subnet and find its corresponding charging price, it finds the maximum sum value
\begin{equation}\label{eq:maxElement}
\begin{aligned}
g=
& \;\arg \max_{\alpha \in \mathcal{K}}(\sum_{n} b_n(\alpha)+\theta(\alpha)) \\
=
& \;\arg \max_{\alpha \in \mathcal{K}}(\sum_{n} b_n(\alpha))+t,
\end{aligned}
\end{equation}
which can be determined by the auctioneer by taking the product of all the encrypted elements in $\mathbf{C}_T$ , i.e. as discussed in Section \ref{sec:PaillierBidding}, which is equivalent to $\prod_{i=1}^{|\mathcal{K}|} \mathbf{c}(\sum_{n=1}^N b_n(\alpha_i)+t)$. The auctioneer determines the maximum element in that product which is equivalent to $g$ in (\ref{eq:maxElement}).

$\mathbf{5.}$ For each allocation $\alpha$, the auctioneer decrypts the $g^{\text{th}}$ element of vector $\mathbf{c}(\sum_{n} b_n(\alpha)+\theta(\alpha))$ in $\mathbf{C}_T$ and finds whether it equals $0$ or $x$. If it equals $x$ at allocation $\alpha^{*}$, then the auctioneer selects $\alpha^{*}$ to be the allocation that maximizes $\sum_n b_n$ in the current subnet and considers its corresponding BSs.

$\mathbf{6.}$ To find the charging price for the root BS, the auctioneer decrypts $\mathbf{c}(\sum_{n\neq z} b_n(\alpha^{*})+\theta)$ of $\mathbf{C}_z$ and finds the masked value $(\sum_{n\neq z} b_n(\alpha^{*})+ \theta)$.

$\mathbf{7.}$ The auctioneer then finds the maximum masked bid of the product of the encrypted elements $\max_{\alpha \in \mathcal{K}}(\sum_{n\neq z} b_n(\alpha)+t)$, similar to Step $\mathbf{4}$, which is equivalent to $(\sum_{n\neq z} b_n(\alpha_{-z}^{*})+t)$.

$\mathbf{8.}$ The auctioneer then finds the charging price for root BS of allocation $\alpha^{*}$ that is given by
\begin{equation}
p_z= (\sum_{n\neq z} b_n(\alpha_{-z}^{*})+t)-(\sum_{n\neq z}b_n(\alpha^{*})+t).
\end{equation}

\section{Simulation and Analysis}\label{sec:Simulation}
In this section, we first evaluate the performance of the proposed \textbf{MTSSA} spectrum auction and compare it with the performance of other spectrum auction mechanisms. Three performance metrics are considered: spectrum utilization, auctioneer's revenue and bidders' satisfaction. These are the most important performance metrics that need to be maximized in a successful spectrum auction. In addition, we analyze the security strategy of the proposed secure spectrum auction \textbf{MTSSA} that makes it able to avoid possible frauds and bid-rigging.
\subsection{Performance Analysis}
We consider a spectrum auction hosted by the auctioneer (the broker) in a $A*A$ $m^2$  square geographical region with two cellular networks located within the same region where the auction takes place. Each cellular network belongs to different WSP, i.e. there exists two WSPs $L=2$ that are interested in participating in the spectrum auction. Each WSP has certain number of BSs, located in macro cells or small cells, that are interested in bidding for the auctioneer's under-utilized frequency bands. The BSs are randomly placed in the auction's geographical area. Suppose that the frequency mutual interference between any two BSs is based on the distance between them. Any two macro cells' BSs located within a distance of $0.4A$ can not be allocated the same frequency bands and these BSs are connected together in the frequency conflict graph. Also, any small cell's BS can not be allocated the same frequency bands of any other BS located within a distance of $0.05A$ from it. In our simulation setup, bids are 
selected randomly with biding per frequency band is monotonically decreasing, i.e the BS's bid for first frequency band is higher than second frequency band and second frequency band  is higher than third frequency band  and so on, see \cite{Ahmed_Utility1, Ahmed_Utility2, CarrierAggregation}.

Based on the frequency assignment policy, three spectrum auction mechanisms are considered in our simulation as described in the following three cases:
\begin{itemize}
\item $Case\;1:$ Conventional spectrum leasing (\textbf{CSL}) case where the government directly leases the under-utilized spectrum to each WSP with heist bid. Once the WSP is assigned certain frequency bands, it then allocates these resources internally to its BSs.
\item $Case\;2:$ \textbf{MTSSA} where each WSP directly submits all of its BSs' encrypted bids to the auctioneer. The auctioneer decides the frequency bands allocation to each BS whereas the WSP has no control on the resources allocation process. By using \textbf{MTSSA} frequency assignment process, each BS can be allocated any number of frequency bands between zero and all of the auctioneer's under-utilized frequency bands.
\item $Case\;3:$ \textbf{MTSSA} with fixed limit (\textbf{MTSSA-FL}) is a special case of the proposed \textbf{MTSSA} where the frequency bands allocation policy is similar to the proposed \textbf{MTSSA} but is restricted in the number of frequency bands that each BS can bid for. Each BS bids for a fixed number of frequency bands and it can be allocated any number of frequency bands between zero and that fixed number of frequency bands it submitted the bids for.
\end{itemize}

We ran Monte Carlo Simulation, for the three cases described above, and the results are averaged over $25$ independent runs in which the location and the bidding values of the BSs are generated randomly and the performance metrics are evaluated. We consider the network setup described above with different number of macro cells and small cells' BSs that belong to the two WSPs. First, we consider $8$ BSs, i.e. $4$ macro cells' BSs and $4$ small cells' BSs. Second, we consider $12$ BSs, i.e. $6$ macro cells' BSs and $6$ small cells' BSs. Third, we consider $16$ BSs, i.e. $8$ macro cells' BSs and $8$ small cells' BSs.

We consider three performance metrics to compare between \textbf{CSL}, \textbf{MTSSA}, and \textbf{MTSSA-FL}. These performance metrics are:
\begin{itemize}
\item Spectrum Utilization: It is represented by the sum of the frequency bands that are allocated by the auctioneer to the winning BSs.
\item Auctioneer's Revenue: It is given by the sum of all BSs' payments, i.e $R=\sum_{n=1}^{n=N}p_n$.
\item Bidders' Satisfaction: It is represented by the sum of all winning BSs' utilities divided by the sum of all BSs' evaluation values, i.e. $\sum_{n\in \mathcal{A}}U_n/\sum_{n\in \mathcal{N}}v_n$, where $\mathcal{A}$ is the set of BSs that are allocated frequency bands.
\end{itemize}

In Figure \ref{fig:Performace}, we compare the performance of the proposed \textbf{MTSSA} and its special case \textbf{MTSSA-FL} with that of a \textbf{CSL} based auction. We plot the spectrum utilization, auctioneer’s revenue and BSs’ satisfaction of the three auction designs with different number of BSs, i.e. 8 BSs, 12 BSs and 16 BSs as mentioned before.

Figure \ref{fig:SpectrumUtilization} shows the spectrum utilization versus the number of available under-utilized frequency bands. As the number of frequency bands increases, the spectrum utilization, which is represented by the number of the allocated frequency bands, also increases for each of the three auction mechanisms. For certain number of frequency bands, each of the three mechanisms shows higher utilization when the number of BSs (bidders) increases. However, it is not surprising that the performance in terms of utilization for \textbf{CSL} is lower than that for the other two mechanisms. This is because in \textbf{CSL}, the auctioneer assigns each WSP different frequency bands and the frequency bands assigned to each WSP are then auctioned among BSs that belong to that WSP. In the case of \textbf{CSL}, the auctioneer considers one frequency conflict graph for each WSP and frequency reusability is not applicable among BSs that belong to different WSPs, i.e. BSs that belong to different WSPs and are 
not within the interference range of each other are not allowed to use the same frequency bands. Moreover, the utilization in the cases of \textbf{MTSSA} and \textbf{MTSSA-FL} is almost the same when the number of available frequency bands is low and is slightly higher for \textbf{MTSSA} than that for \textbf{MTSSA-FL} when the number of available frequency bands is higher.

Figure \ref{fig:Revenue} shows that for each of the three mechanisms, the auctioneer's revenue increases when the number of BSs increases. This is expected as the auctioneer's revenue increases with more bidders requesting more resources. However, for certain number of BSs, the auctioneer's revenue for \textbf{MTSSA-FL} is higher than that for \textbf{MTSSA} and \textbf{CSL} and as expected the auctioneer's revenue is the lowest in the case of \textbf{CSL}. The bump of \textbf{MTSSA} over \textbf{CSL} is from the payments received from winning BSs that belong to different WSPs, and not located within the interference range of each other, but are allocated similar frequency bands.

We show in Figure \ref{fig:BiddersSatisfaction} that as the number of frequency bands increases, the bidders' satisfaction also increases until it saturates when each bidder is allocated the number of frequency bands he bids for. On the other hand, the bidders’ satisfaction for \textbf{CSL} is higher than that for \textbf{MTSSA} and \textbf{MTSSA-FL} due to the less number of BSs competing for resources as all the frequency bands are allocated to one WSP. Therefore, the bidders who belong to the winning WSP get their requested frequency bands while paying less.

The comparison between the three mechanisms in Figure \ref{fig:Performace} shows that \textbf{MTSSA} behaves well in terms of performance and proves to be better than the conventional spectrum leasing mechanism as it considers spectrum reusability and in the same time it guarantees a secure spectrum auction.

\begin{figure}[h!]
  \centering
\subfigure[Spectrum Utilization]{%
  \label{fig:SpectrumUtilization}
  \includegraphics[height=2.5in, width=3.5in]{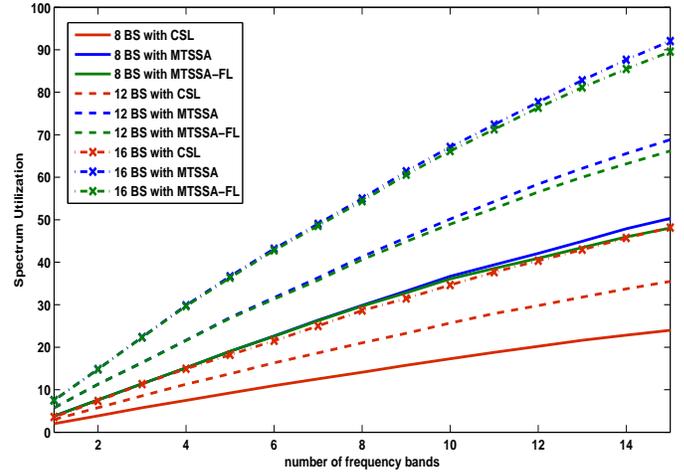}
  }\\%
  \subfigure[Auctioneer's Revenue]{%
  \label{fig:Revenue}
  \includegraphics[height=2.5in, width=3.5in]{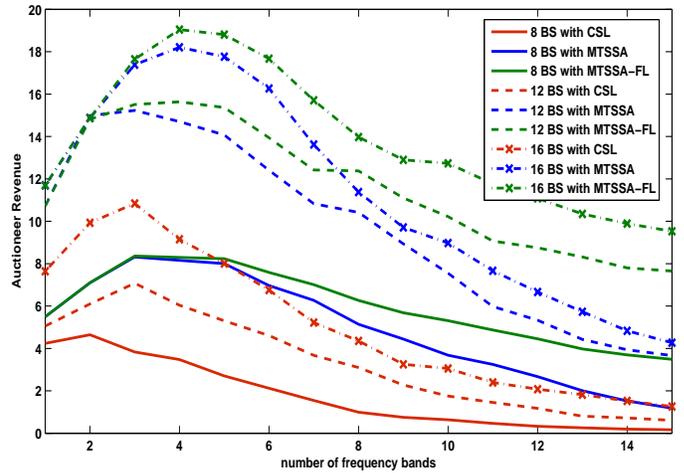}
  }\\%
\subfigure[Bidders' Satisfaction]{%
  \label{fig:BiddersSatisfaction}
  \includegraphics[height=2.5in, width=3.5in]{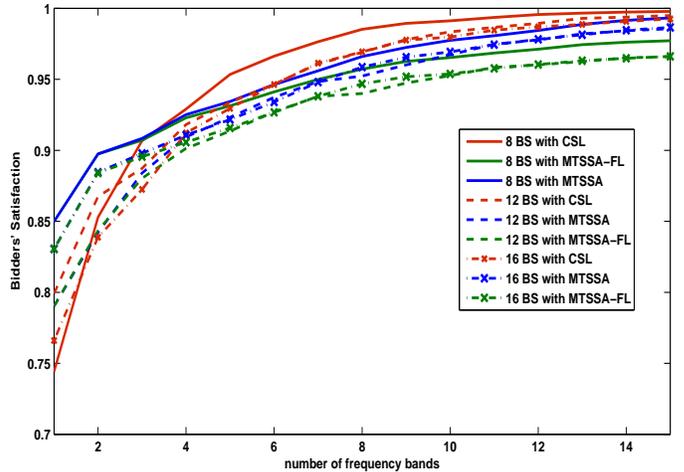}
  }%
\caption{Performance comparison of \textbf{MTSSA}, \textbf{MTSSA-FL} and \textbf{CSL}.}
\label{fig:Performace}
\end{figure}

\subsection{\textbf{MTSSA} Security Analysis}
As discussed before, our proposed \textbf{MTSSA} leverages Paillier cryptosystem in order to ensure that the BSs' bidding values are kept unknown to the auctioneer while the auctioneer is still able to find the winners and charges them their corresponding payments. This is possible because of the indistinguishability property of Paillier cryptosystem, i.e. it is not possible to know the value of $z$ without decrypting each element in $c(z)$, and the self blinding property that makes it impossible to find a mapping function from $c(t)$ to $c^{-1}(z+t)$ \cite{Paillier:1999:PKC,Paillier:1999:EPC, Themis}. In order to prevent an insincere auctioneer from performing any frauds, we consider a secure gateway that is operated by federal government. Its main function is to randomize the encrypted bids by the random constant $t$. So auctioneer can determine the winning allocation and assign secondary price without any knowledge of the original bidding values of BSs. This way \textbf{MTSSA} can avoid bid-rigging between an insincere auctioneer and a greedy bidder. This can 
be guaranteed because even if certain bidder colludes with auctioneer, he can not find out about the bidding values as federal gateway randomized it. Therefore, all BSs that belong to different WSPs are treated equally by the proposed \textbf{MTSSA} and their bidding values are kept secret from the auctioneer who is only able to determine the winners and their corresponding charged price.


\section{Conclusion}\label{sec:conclude}
In this paper, we propose a secure spectrum auction \textbf{MTSSA} for a multi-tier dynamic spectrum sharing system. By considering the spectrum reusability property, \textbf{MTSSA} enables an efficient sharing of the under-utilized frequency bands with commercial WSPs. In order to allow spectrum reuse among multiple WSPs, the frequency conflict graph that is considered by \textbf{MTSSA} includes all BSs that belong to multiple WSPs and the auction is carried out in one subnet after another. \textbf{MTSSA} leverages Paillier cryptosystem and a federal gateway to keep the BSs' bidding values unknown to the auctioneer. The auctioneer uses the additive homomorphic property of Paillier cryptosystem to find the winning bidders and their charging prices. This prevents possible frauds and bid-rigging between an insincere auctioneer and greedy bidders. Compared with conventional spectrum leasing mechanism, \textbf{MTSSA} has shown better performance while providing a secure spectrum auction against possible back 
room dealings.

\bibliographystyle{ieeetr}
\bibliography{pubs}
\end{document}